\documentclass[journal,twoside, web]{ieeecolor}    
\usepackage{pgfplots}
\usepackage{amsmath,amssymb,amsfonts}
\usepackage{tikz}
\usepackage{float}
\usepackage{cite}
\usepackage{graphicx}
\graphicspath{{images/}}
\usepackage{lcsys}
\pgfplotsset{compat=1.18}

\usepackage{algorithm}
\usepackage{algpseudocode}

\newtheorem{theorem}{Theorem}[section]
\newtheorem{lemma}[theorem]{Lemma}
\newtheorem{proposition}[theorem]{Proposition}

\newtheorem{definition}[theorem]{Definition}

\newtheorem{assumption}[theorem]{Assumption}

\begin{document}

\def\BibTeX{{\rm B\kern-.05em{\sc i\kern-.025em b}\kern-.08em
    T\kern-.1667em\lower.7ex\hbox{E}\kern-.125emX}}
\markboth{\journalname, VOL. XX, NO. XX, XXXX 2017}
{Author \MakeLowercase{\textit{et al.}}: Preparation of Papers for IEEE Control Systems Letters (August 2022)}

\title{Characterization and Computation of Feedback Nash Equilibria in Scalar Discounted N-Player Linear Quadratic Games}

\author{Chiara Cavalagli$^{a}$, Alberto Bemporad$^{a}$, Mario Zanon$^{a}$ 
\thanks{$^{a}$ Research Unit DYSCO (Dynamical Systems, Control, and Optimization) at IMT School for Advanced Studies Lucca.}
\thanks{This work was funded by the European Union (ERC Advanced Research Grant COMPACT, No. $101141351$). Views and opinions expressed are however those of the authors only and do not necessarily reflect those of the European Union or the European Research Council. Neither the European Union nor the granting authority can be held responsible for them.}
}

\maketitle
\thispagestyle{empty}


\begin{abstract}
This paper studies feedback Nash equilibria (FNE) in scalar discounted linear quadratic (LQ) games with $N$ players. By explicitly incorporating the discount factor, we show that finite-cost equilibria may fail to stabilize the original system, motivating a distinction between FNE and stable FNE together with a sufficient stability condition. Based on a parametric characterization of the policies, we propose numerical methods for computing all equilibria. Particular attention is devoted to the symmetric game, where a closed-form expression of the symmetric FNE and conditions for the existence of up to $M\leq2^N-2$ equilibria are derived. Numerical experiments illustrate how equilibrium multiplicity depends on the game configuration and highlight the emergence of finite-cost non-stabilizing equilibria.
\end{abstract}

\begin{IEEEkeywords}
Linear quadratic games, feedback Nash equilibrium, equilibrium computation
\end{IEEEkeywords}



\section{Introduction}
\IEEEPARstart{L}{inear} Quadratic (LQ) discrete-time dynamic games~\cite{noncooperative} model strategic interactions among agents with linear coupled dynamics and quadratic objective functions. Typically constructed under the assumption of adopting a linear feedback law for each player, the solution concept---the Feedback-Nash-Equilibrium (FNE)~\cite{noncooperative}---characterizes the strategy profile as the best outcome that one can achieve with respect to what the other agents are doing. The infinite-horizon formulation, widely studied in the literature, serves as a cornerstone for extensions to constrained~\cite{benenati2025}, finite-horizon~\cite{salizzoni2025hall}, and MPC-based settings~\cite{hall2025solvingmultiparametricgeneralizednash}.  Many numerical methods have been proposed, including Riccati-based iterations~\cite{policy_iteration, model_free_LQ}, gradient-based algorithms~\cite{grad_desce}, non linear least squares~\cite{bemporad2025nashoptpythonlibrary}, and data-driven approaches~\cite{offpolicy_q, model_free_LQ}. 
Despite the amount of literature, the majority of existing methods are primarily designed for the purpose of identifying a single FNE. This stems not only from the complexity of the coupled equations, but also from the difficulty of characterizing the full set of equilibria. Recent efforts have focused on the scalar setting, which, although representing an elementary instance of LQ games, has attracted significant attention due to its rich mathematical structure. In~\cite{Salizzoni_2025_scalar}, the coupled best-response equations are reduced to polynomial systems through Gröbner basis techniques, yielding conditions for uniqueness and multiplicity in the two-player case together with a numerical method for computing all equilibria of relatively small games. In~\cite{feedbackNE}, the authors develop a graphical characterization of scalar $N$-player games through auxiliary functions, deriving conditions for the existence and multiplicity of FNE.\\Our work extends the results of \cite{feedbackNE, Salizzoni_2025_scalar} by addressing the discounted scalar LQ game, where the objective function is weighted through a discount factor $\gamma\in(0,1]$. While discounting is standard in economic models~\cite{engwerda2005lq} and reinforcement learning, its role in the analysis of scalar LQ games has received limited attention. As shown in Section~\ref{sec:general_setting}, introducing the discount factor reveals the existence of finite-cost equilibria that do not necessarily stabilize the original dynamics. This motivates a distinction between FNE and stable FNE, together with a sufficient condition ensuring that all equilibria are stabilizing. Building on the aggregate-value parametrization developed in~\cite{feedbackNE} for the undiscounted game, and in~\cite{engwerda2007algorithms} for the continuous-time counterpart, we derive scalar equations whose roots characterize all equilibria and propose a numerical method for computing all of them. To the best of our knowledge, this is the first complete algorithm for computing all FNE of discounted scalar $N$-player LQ games. We then focus on the symmetric setting, where all players share the same cost parameter. In this case, we derive a closed-form expression for the symmetric FNE together with conditions guaranteeing the existence of up to $M\leq2^N-2$ equilibria. An additional numerical scheme for computing all equilibria is also proposed. Finally, numerical experiments illustrate how equilibrium multiplicity depends on the game parameters and highlight the emergence of finite-cost non-stabilizing equilibria when the sufficient stability condition is violated.



\emph{Notation.}
We denote respectively by $\mathbb{R}, \mathbb{R}_{\geq0}, \mathbb{R}_{>0}$  the set of real numbers, of real positive numbers included zero, and of real positive numbers excluded zero. We denote as $[N]=\{1,\ldots,N\}$ the set of integers between $1$ and $N$, and as $\lfloor N/2\rfloor$ the largest integer smaller than $N/2$. We denote the policy of player $i$ as $k_i$ and of all players except player $i$ as $k_{-i}$. Moreover, we introduce, for every index $i$, the aggregate notation of $S_{-i}= \displaystyle \sum_{j\neq i}k_j=\mathbf{1}^\top k_{-i}$. 

\section{Problem Statement}\label{sec:problem_statement}
Let us consider a scalar non-cooperative discrete-time system, described by an environment state $x_t\in\mathbb{R}$ whose dynamics depend  on the controls $u_{i,t}$ of $N$ players as follows
\begin{equation}\label{eq:LQ}
        x_{t+1}  = a x_t + \displaystyle \sum_{i=1}^{N} u_{i,t}\ ,
\end{equation}
where $a\in\mathbb{R}$ is a fixed parameter. Each player $i$ aims to minimize an infinite-horizon quadratic cost of the form
\begin{equation}\label{eq:cost_j}
J_i(x_t, u_{i,t})  = \displaystyle \displaystyle \sum_{k=t}^{\infty} \gamma^k r_i\left( \sigma_i x_k^2 + u_{i,k}^2 \right),
\end{equation}
with weights $\sigma_i\in\mathbb{R}_{>0}, r_i\in\mathbb{R}_{>0}$ and $\gamma \in(0,1]$ discount factor. The inclusion of the discount factor is a standard technique in reinforcement learning  approaches and economic models for the discount of future losses, as explained in~\cite[Sec.~$3.6$]{engwerda2005lq}. While the framework~\eqref{eq:cost_j} can be rewritten in a standard LQ game~\cite{feedbackNE, Salizzoni_2025_scalar, Nortmann_2023_two_players} via the change of variables $\tilde{a}=a\sqrt{\gamma}$, $\tilde{\sigma_i}=\gamma \sigma_i$ and $\tilde{k_i}=k_i\sqrt{\gamma}$, keeping the discount factor explicit makes it possible to distinguish between stability of the original closed-loop system and convergence of the discounted cost, as discussed shortly. This class of problems~\eqref{eq:LQ}-\eqref{eq:cost_j}, called linear quadratic (LQ) games~\cite{noncooperative}, represents the game-theoretical equivalent of the linear quadratic regulator (LQR).

In this work, we restrict attention to the class of linear state-feedback strategies $u_{i,t} = k_i x_t$, $k_i \in \mathbb{R}$, whose corresponding solution is referred to as feedback Nash equilibrium (FNE). While more general (e.g., nonlinear or open-loop) strategies could be considered, the linear feedback assumption significantly simplifies the analysis; it should therefore be understood as a structural restriction of the present work.

For every fixed vector of policies $k=(k_1, \ldots,k_N)$, the closed loop system evolves as $x_{t+1}=a_\mathrm{cl}(k)x_t$, with $a_\mathrm{cl}(k)=\left(a  + \sum_{i=1}^{N} k_i\right)$. Given an initial state $x_0\in\mathbb{R}$, by substituting the closed-loop dynamics into the objective function~\eqref{eq:cost_j}, the corresponding value function for player $i$ is
\begin{equation}\label{eq:value_function_0}
    V_i(x_0; k_i,k_{-i})=  x_0^2r_i\left( \sigma_i + k^2_i  \right)\displaystyle \sum_{t=0}^{\infty}\left( \gamma a_{\mathrm{cl}}^2(k)\right)^t ,
\end{equation}
which converges if and only if $\left|a_\mathrm{cl}(k)\right|<\frac{1}{\sqrt{\gamma}}$, enlarging the subset of solutions yielding finite cost but possibly non-stabilizing the system. This distinction motivates separating the notions of FNE, and stable FNE, depending on whether the induced closed-loop system \eqref{eq:LQ} is stable.

\begin{definition}[Stable feedback Nash equilibrium] \label{def:stable_fne}
    Given a discounted $N$-player LQ game, the feedback $k^\star$ is a FNE if, for all $x_0$ and all $k_i$: $V^{\star}_i(x_0):=V_i(x_0;k^\star_i,k^\star_{-i})\leq V_i(x_0;k_i,k^\star_{-i})$. Moreover, $k^\star$ is a stable FNE if, in addition,        \begin{equation}\label{eq:stability_oc}
            |a_{\mathrm{cl}}(k^\star)|<1.
        \end{equation}
\end{definition}

From the definition of value function in~\eqref{eq:value_function_0},
a polynomial characterization of the stable FNE is obtained, as formalized next. 
\begin{proposition}[Problem Statement]\label{prop:problem_statement}
Let us consider a discounted $N$-player LQ game, and a vector $k=(k_i, k_{-i})$ such that $\left|a_\mathrm{cl}(k)\right|<\frac{1}{\sqrt{\gamma}}$. Then $k$ is a stable FNE if it solves
\begin{equation}\label{eq:stable_FNE_sys}
\left\{
\begin{array}{l}
f_i(k_i,k_{-i})=0,\ \forall i\in[N],\\
|a_{\mathrm{cl}}(k)|<1
\end{array}
\right.
\end{equation}
where
\begin{equation}
\begin{aligned}\label{eq:best_response}
    f_i (k_i, k_{-i})= &-\gamma k_i^{2} \left(a + S_{-i}\right)
+ \gamma \sigma_i \left(a + S_{-i}\right)\\
&\ - k_i\left(\gamma\left(a + S_{-i}\right)^{2}-\gamma\sigma_i-1\right)=0,
\end{aligned}
\end{equation} 
is the first order condition that characterizes the best response of player $i$.
\end{proposition}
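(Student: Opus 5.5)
Fix $i\in[N]$ and the other players' policies $k_{-i}$. The approach is to reduce player $i$'s problem to minimizing a scalar function of $k_i$, compute its stationarity condition, and argue that this stationary point is the global minimizer. Write $b_i:=a+S_{-i}$, so that $a_{\mathrm{cl}}(k)=b_i+k_i$. By \eqref{eq:value_function_0}, whenever $\gamma (b_i+k_i)^2<1$ the geometric series converges, and
\[
V_i(x_0;k_i,k_{-i})=x_0^2 r_i\,\frac{\sigma_i+k_i^2}{1-\gamma(b_i+k_i)^2}=:x_0^2 r_i\,\phi_i(k_i).
\]
If instead $\gamma(b_i+k_i)^2\ge 1$, the cost is $+\infty$, because $\sigma_i+k_i^2>0$. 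So the best-response problem for player $i$ is to minimize $\phi_i$ over the open interval $I_i=\{k_i:|b_i+k_i|<1/\sqrt{\gamma}\}$. The factor $x_0^2 r_i\ge 0$ does not affect the minimizer, and the inequality in Definition~\ref{def:stable_fne} holds trivially for $x_0=0$.

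\textbf{Stationarity.} Differentiate $\phi_i$ on $I_i$:
\[
\phi_i'(k_i)=\frac{2k_i\bigl(1-\gamma(b_i+k_i)^2\bigr)+2\gamma(b_i+k_i)(\sigma_i+k_i^2)}{\bigl(1-\gamma(b_i+k_i)^2\bigr)^2}.
\]
Expanding the numerator gives half of it as
\[
k_i-\gamma k_i b_i^2-2\gamma k_i^2 b_i-\gamma k_i^3+\gamma b_i\sigma_i+\gamma k_i\sigma_i+\gamma b_i k_i^2+\gamma k_i^3 .
\]
The cubic terms cancel, and what remains is $-\gamma k_i^2 b_i+\gamma\sigma_i b_i-k_i(\gamma b_i^2-\gamma\sigma_i-1)=f_i(k_i,k_{-i})$. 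Hence $\phi_i'(k_i)=2f_i/(1-\gamma(b_i+k_i)^2)^2$, and stationary points on $I_i$ are exactly the roots of $f_i$ lying in $I_i$.

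\textbf{Global optimality (main obstacle).} First-order conditions are only necessary, so I must show that a root of $f_i$ in $I_i$ is the global minimizer of $\phi_i$ on $I_i$. Since $f_i$ has the same sign as $\phi_i'$, it suffices to show that $\phi_i'$ changes sign exactly once on $I_i$, from negative to positive.
\begin{itemize}
\item At the endpoints of $I_i$ the denominator of $\phi_i$ vanishes while its numerator stays positive, so $\phi_i\to+\infty$ at both ends. Therefore $\phi_i$ attains its minimum in the interior of $I_i$, at a root of $f_i$.
\item The function $f_i$ is a quadratic in $k_i$ (or linear if $b_i=0$, in which case its unique root is $k_i=0$). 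So it has at most two roots.
\item Since $\phi_i'<0$ near the left end of $I_i$ and $\phi_i'>0$ near the right end, $f_i$ has an odd number of sign changes on $I_i$. Being a quadratic, it therefore has exactly one root in $I_i$ (a double root would give no sign change). That root is the global minimizer.
\end{itemize}
This is the step I expect to need the most care: I must handle the degenerate case $b_i=0$ and check the endpoint behaviour. An alternative route is to substitute $z=b_i+k_i$ and study $\phi_i$ directly.

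\textbf{Conclusion.} Suppose $k$ satisfies $|a_{\mathrm{cl}}(k)|<1/\sqrt{\gamma}$ and $f_i(k)=0$ for all $i$. Then each $k_i$ lies in $I_i$ and is a root of $f_i$, hence player $i$'s global best response to $k_{-i}$. This gives the Nash inequality of Definition~\ref{def:stable_fne} for every $x_0$. The additional requirement $|a_{\mathrm{cl}}(k)|<1$ in \eqref{eq:stable_FNE_sys} is exactly \eqref{eq:stability_oc}, so $k$ is a stable FNE.
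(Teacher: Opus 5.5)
Your proposal is correct and follows the paper's route: sum the geometric series, differentiate in $k_i$, and identify the numerator of $\partial V_i/\partial k_i$ with $2f_i$. The paper only asserts that the first-order condition is sufficient under $|a_{\mathrm{cl}}(k)|<1/\sqrt{\gamma}$, whereas you prove it, using the blow-up of $\phi_i$ at the ends of $I_i$ and the fact that $f_i$ has at most two roots; the endpoint signs you rely on can be read off directly, since $f_i=\gamma(b_i+k_i)(\sigma_i+k_i^2)$ whenever $\gamma(b_i+k_i)^2=1$.
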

\begin{proof} 
    The value function $V_i$ of the $i$-th player is defined by the geometric series~\eqref{eq:value_function_0}, which, thanks to the assumption $\left|a_\mathrm{cl}(k)\right|<\frac{1}{\sqrt{\gamma}}$ converges to
    \begin{equation}\label{eq:value_function}
        V_i(x_0, k_i, k_{-i})=\frac{x_0^2r_i\left( \sigma_i + k^2_i  \right)}{1-\gamma a_{\mathrm{cl}}^2(k)}.
    \end{equation}
    We can then formulate the optimality conditions for the stable FNE from Definition~\eqref{def:stable_fne} as follows:
    \begin{equation}\label{eq:NEC}
        \frac{\partial V_i}{\partial k_i}=\frac{2 \gamma a_\mathrm{cl}(k)\left(\sigma_i + k_{i}^{2}\right)+ 2 k_{i} \left(1-\gamma a_\mathrm{cl}^2(k) \right)}{\left(1-\gamma a_{\mathrm{cl}}^2(k)\right)^{2}}=0.
    \end{equation}
    By rewriting $a_\mathrm{cl}(k)=a+k_i+S_{-i}$ and by restricting ourselves to stable systems~\eqref{eq:stability_oc}, the denominator is by construction never $0$; by considering only the numerator, we obtain~\eqref{eq:best_response}. Moreover, under $\left|a_\mathrm{cl}(k)\right|<\frac{1}{\sqrt{\gamma}}$,
    condition~\eqref{eq:NEC} is also sufficient for optimality, and therefore uniquely characterizes the best response of player~$i$.
\end{proof}
In the next section, we provide an analysis of the best response system described by~\eqref{eq:stable_FNE_sys} which will be used for the development of a numerical method that seeks all stable FNE.

\section{The General Setting}\label{sec:general_setting}
In this section, we provide a geometrical analysis of the system~\eqref{eq:best_response}, from which we firstly discuss the distinction between stable FNE and FNE, to then provide a parametric characterization of all solutions. Throughout the whole work, we will assume non-zero policies $k_i\neq 0$ for every player and non-zero game-parameter $a\neq 0$. 

\begin{lemma}\label{lemma:factorization_sys}
    The first order condition function $f_i(k_i, k_{-i})=0$ can be factorized as $f_i(k_i, k_{-i})=-\gamma k_i\left(S_{-i}-h_i^-(k_i)\right) \left(S_{-i}-h_i^+(k_i)\right)=0$, where 
\begin{equation}\label{eq:roots}
\begin{aligned}
        \displaystyle h_i^\pm(k_i)=&\frac{\sigma_i-k_i(2a+k_i)}{2k_i} \pm\frac{\sqrt{\gamma^2(\sigma_i+k_i^2)^2+4\gamma k_i^2}}{2\gamma k_i},
\end{aligned}
\end{equation}
are the two roots of $f_i(k_i, k_{-i})=0$ iteratively solved w.r.t. every player policy except the $i$-th one.
\end{lemma}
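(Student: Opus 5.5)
The plan is to regard $f_i$ in~\eqref{eq:best_response} as a polynomial in the aggregate variable $y:=a+S_{-i}$, with $k_i\neq 0$ held fixed (as assumed at the start of Section~\ref{sec:general_setting}), and to factor it as a quadratic in $y$. Expanding and collecting powers of $y$ gives
\begin{equation*}
f_i=-\gamma k_i\,y^2+\gamma(\sigma_i-k_i^2)\,y+k_i(\gamma\sigma_i+1).
\end{equation*}
The leading coefficient $-\gamma k_i$ is nonzero because $\gamma>0$ and $k_i\neq0$. Hence $f_i=-\gamma k_i\,(y-y^-)(y-y^+)$, where $y^\pm$ are the roots of the monic polynomial
\begin{equation*}
y^2-\frac{\sigma_i-k_i^2}{k_i}\,y-\frac{\gamma\sigma_i+1}{\gamma}.
\end{equation*}

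Next I will compute $y^\pm$ with the quadratic formula. The root is $y^\pm=\frac{\sigma_i-k_i^2}{2k_i}\pm\frac12\sqrt{\Delta}$, with $\Delta=\frac{(\sigma_i-k_i^2)^2}{k_i^2}+\frac{4(\gamma\sigma_i+1)}{\gamma}$. The main algebraic step is to put $\Delta$ over the common denominator $\gamma^2k_i^2$. Its numerator becomes $\gamma^2(\sigma_i-k_i^2)^2+4\gamma^2\sigma_ik_i^2+4\gamma k_i^2$, and this equals $\gamma^2(\sigma_i+k_i^2)^2+4\gamma k_i^2$ by the identity $(\sigma-k^2)^2+4\sigma k^2=(\sigma+k^2)^2$.

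This numerator is strictly positive, so both roots are real and distinct. Moreover, $\sqrt{\Delta}=\sqrt{\gamma^2(\sigma_i+k_i^2)^2+4\gamma k_i^2}\,/(\gamma|k_i|)$.

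The one point needing care is the absolute value $|k_i|$. Replacing $|k_i|$ by $k_i$ leaves the pair $\{y^+,y^-\}$ unchanged, because the two roots enter symmetrically through $\pm$. For $k_i<0$ the labels $+$ and $-$ simply swap, and the product $(y-y^-)(y-y^+)$ does not change. I will therefore write $y^\pm=\frac{\sigma_i-k_i^2}{2k_i}\pm\frac{\sqrt{\gamma^2(\sigma_i+k_i^2)^2+4\gamma k_i^2}}{2\gamma k_i}$.

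Finally, I substitute back $S_{-i}=y-a$. Then $y-y^\pm=S_{-i}-h_i^\pm(k_i)$ with $h_i^\pm=y^\pm-a$. Writing $-a=-\frac{2ak_i}{2k_i}$ combines the first term into $\frac{\sigma_i-k_i(2a+k_i)}{2k_i}$, which is exactly~\eqref{eq:roots}.

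This yields the claimed factorization $f_i=-\gamma k_i(S_{-i}-h_i^-)(S_{-i}-h_i^+)$. It also shows that $h_i^\pm(k_i)$ are exactly the two values of $S_{-i}$ at which $f_i(k_i,\cdot)$ vanishes, for each fixed nonzero $k_i$.
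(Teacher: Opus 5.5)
Your proposal is correct and takes the same route as the paper. The paper states this lemma without proof, as the direct computation of the two roots of $f_i$ viewed as a quadratic in $S_{-i}$ (equivalently in $a+S_{-i}$), which is exactly what you do. Your discriminant simplification via $(\sigma_i-k_i^2)^2+4\sigma_i k_i^2=(\sigma_i+k_i^2)^2$ is right, and so is your remark that writing $k_i$ instead of $|k_i|$ only swaps the labels of the two roots, leaving the factorization $-\gamma k_i(S_{-i}-h_i^-)(S_{-i}-h_i^+)$ unchanged.
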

The first order condition of player $i$ is therefore linear in $S_{-i}$ and nonlinear in $k_i$, a crucial relation for the characterization of the solutions. As proved in earlier works for the undiscounted case~\cite[Lemma~$1$]{feedbackNE}\cite[Thm.~$5$]{Salizzoni_2025_scalar}, only one of the two roots yields stable FNE. This is not always true for the discounted version. As already shown for the discounted two-player game in~\cite[Prop.~$4.2$]{cavalagli2026}, we extend such result for the $N$-player case in the following Proposition.

\begin{proposition}\label{prop:ONLY_ONE_ROOT}
Let $k^\star=(k_i^\star, k_{-i}^\star)$ be a FNE.
\begin{equation}\label{eq:nec_cond}
    k^\star \text{ is a stable FNE}  \Rightarrow\ 
        S_{-i}^\star =h_i^-(k_i^\star)\ \forall i,
\end{equation}
\begin{equation}\label{eq:suff_cond}
    k^\star \text{ is a stable FNE}\Leftarrow\exists\ i:\  
    \begin{cases}
        &S_{-i}^\star =h_i^-(k_i^\star)   \\
        &\sigma_i>\frac{(1-\gamma)^2}{4\gamma^2}  
    \end{cases}.
\end{equation}
\end{proposition}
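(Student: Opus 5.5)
The plan is to rewrite both branches in terms of the closed-loop gain, which is the same scalar for every player. Since $a_{\mathrm{cl}}(k)=a+k_i+S_{-i}$, substituting $S_{-i}=h_i^\pm(k_i)$ from Lemma~\ref{lemma:factorization_sys} gives two candidate closed-loop values
\begin{equation*}
a_{\mathrm{cl}}^\pm(k_i)=\frac{\gamma(\sigma_i+k_i^2)\pm D_i}{2\gamma k_i},\qquad D_i:=\sqrt{\gamma^2(\sigma_i+k_i^2)^2+4\gamma k_i^2}.
\end{equation*}
Equivalently, $a_{\mathrm{cl}}^\pm$ are the two roots of $\gamma k_i\,\alpha^2-\gamma(\sigma_i+k_i^2)\alpha-k_i=0$, which is \eqref{eq:best_response} rewritten in the variable $\alpha=a_{\mathrm{cl}}$. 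Vieta's formulas then give
\begin{equation*}
a_{\mathrm{cl}}^+a_{\mathrm{cl}}^-=-\tfrac{1}{\gamma},\qquad a_{\mathrm{cl}}^++a_{\mathrm{cl}}^-=\tfrac{\sigma_i+k_i^2}{k_i}.
\end{equation*}

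For the necessity part \eqref{eq:nec_cond}, I will first show $|a_{\mathrm{cl}}^+|>|a_{\mathrm{cl}}^-|$. The roots have opposite signs because their product is negative. Their sum has the sign of $k_i$. The root sharing that sign is $a_{\mathrm{cl}}^+$, since $D_i/(\gamma k_i)$ has the sign of $k_i$, so $a_{\mathrm{cl}}^+$ has the larger modulus. Combining this with $|a_{\mathrm{cl}}^+||a_{\mathrm{cl}}^-|=1/\gamma$ yields $|a_{\mathrm{cl}}^+|>1/\sqrt{\gamma}\ge 1$. Hence the branch $S_{-i}=h_i^+(k_i)$ never gives even a finite cost, let alone stability. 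A stable FNE satisfies $f_i=0$ for every $i$ by Proposition~\ref{prop:problem_statement} and Lemma~\ref{lemma:factorization_sys}, with $k_i\neq0$. It therefore must lie on the $h_i^-$ branch for every $i$.

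For the sufficiency part \eqref{eq:suff_cond}, I will use that $a_{\mathrm{cl}}$ does not depend on the player. So it suffices to show $|a_{\mathrm{cl}}^-(k_i)|<1$ for the single index $i$ named in the hypothesis. Rationalizing with $D_i^2-\gamma^2(\sigma_i+k_i^2)^2=4\gamma k_i^2$ gives
\begin{equation*}
|a_{\mathrm{cl}}^-|=\frac{2|k_i|}{D_i+\gamma(\sigma_i+k_i^2)},
\end{equation*}
so the goal becomes $D_i>2|k_i|-\gamma(\sigma_i+k_i^2)$.

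If the right-hand side is nonpositive, the inequality is immediate. Otherwise I will square both sides and cancel $\gamma^2(\sigma_i+k_i^2)^2$. After dividing by $4|k_i|$, this reduces to
\begin{equation*}
\gamma k_i^2-(1-\gamma)|k_i|+\gamma\sigma_i>0.
\end{equation*}
This is a quadratic in $|k_i|$ with positive leading coefficient and discriminant $(1-\gamma)^2-4\gamma^2\sigma_i$. The discriminant is negative exactly when $\sigma_i>(1-\gamma)^2/(4\gamma^2)$, in which case the inequality holds for every $k_i$. This gives $|a_{\mathrm{cl}}(k^\star)|<1$, so the FNE is stable.

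I expect the main obstacle to be the sign and modulus bookkeeping that identifies which root is the larger one for both signs of $k_i$. The rest is algebra once the closed-loop reformulation is in place.
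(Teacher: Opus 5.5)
Your proof is correct, but it takes a different route from the paper's.

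\textbf{The paper's route.} It studies $z_i^{\mathrm u}(k_i)=a+k_i+h_i^+(k_i)$ and $z_i^{\mathrm s}(k_i)=a+k_i+h_i^-(k_i)$ as functions of $k_i$. It asserts their limiting behaviour and locates their extrema at $k_i=\pm\sqrt{\sigma_i}$. It then reads off the extremal values $\pm(\sqrt{\sigma_i}+\sqrt{\sigma_i+1/\gamma})$ and $\mp(\sqrt{\sigma_i+1/\gamma}-\sqrt{\sigma_i})$. The threshold $\sigma_i>(1-\gamma)^2/(4\gamma^2)$ is exactly the condition that the latter lies in $(-1,1)$.

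\textbf{Your route.} You recast the first-order condition as a quadratic in $a_{\mathrm{cl}}$ and apply Vieta's formulas. The product $-1/\gamma$, together with the sign of the sum, gives $|a_{\mathrm{cl}}^+|>1/\sqrt{\gamma}$ with no need to locate any extremum. On the stable branch you rationalize and reduce $|a_{\mathrm{cl}}^-|<1$ to positivity of $\gamma t^2-(1-\gamma)t+\gamma\sigma_i$ in $t=|k_i|$. Its discriminant reproduces the same threshold.

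\textbf{What each approach buys.}
\begin{itemize}
\item Your argument is purely algebraic. It replaces the calculus claims that the paper states without derivation (unboundedness, location of extrema) with checkable identities.
\item You make explicit that the $h_i^+$ branch violates $|a_{\mathrm{cl}}|<1/\sqrt{\gamma}$. So it can never produce even a finite-cost FNE, not merely a non-stable one. The paper's bound $\sqrt{\sigma_i}+\sqrt{\sigma_i+1/\gamma}$ is numerically tighter and implies this too, but the paper only concludes ``larger than $1$''.
\item The paper's route delivers the explicit worst-case closed-loop value $\sqrt{\sigma_i+1/\gamma}-\sqrt{\sigma_i}$ on the stable branch. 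That value is the quantity reused in the feasibility bound of the subsequent optimal problem statement.
\end{itemize}

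The two routes meet directly: dividing your quadratic by $t$ gives $\gamma(t+\sigma_i/t)>1-\gamma$. Its worst case is $t=\sqrt{\sigma_i}$, precisely where the paper places its extremum.
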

\begin{proof}
    $(\Rightarrow)$. Consider a stable FNE $(k_i, k_{-i})$. For each player $i$, by Lemma~\ref{lemma:factorization_sys},  either $h_i^-(k_i)=S_{-i}$ or $h_i^+(k_i)=S_{-i}$. Consider the latter and substitute it inside the stability condition $\left|a+k_i+S_{-i}\right|=\left|a+k_i+h_i^+(k_i)\right|=\left|z_i^\mathrm{u}(k_i)\right|$. One can verify that the function $z_i^\mathrm{u}(k_i)$ is unbounded as $|k_i|\to\infty$ and as $k_i\rightarrow 0$. Moreover, its extrema are attained at $\overline{k_i}=\pm\sqrt{\sigma_i}$, where $z_i^\mathrm{u}(\pm\sqrt{\sigma_i})=\pm\left(\sqrt{\sigma_i}+\sqrt{\sigma_i+1/\gamma}\right)$, which are always larger than $1$ in absolute value. Hence, $|z_i^{\mathrm u}(k_i)|>1$ for every admissible $k_i$, contradicting stability. Therefore, the only branch that can yield a stable FNE is $h_i^-(k_i)$. $(\Leftarrow)$. Consider FNE $(k_i, k_{-i})$ such that $S_{-i}=h_i^-(k_i)$ for a given $i$ and let us prove that it is stable. To do that, let us consider the stability condition $\left|a+k_i+S_{-i}\right|<1$ and substitute the hypothesis $\left|a+k_i+h_i^-(k_i)\right|=\left|z_i^\mathrm{s}(k_i)\right|$. The function $z_i^\mathrm{s}(k_i)$ converges to $0$ as $k_i\to0$ and as $|k_i|\to\infty$. Its stationary points are attained at $k_i=\pm\sqrt{\sigma_i}$, with $z_i^\mathrm{s}(\pm\sqrt{\sigma_i})=\mp\left(\sqrt{\sigma_i+1/\gamma}-\sqrt{\sigma_i}\right)$ being strictly inside $(-1, 1)$ if and only if $\sigma_i>\frac{(1-\gamma)^2}{4\gamma^2}$. Hence $|a+k_i+S_{-i}|<1$, and the FNE is stable.
\end{proof}
It is important to note that~\eqref{eq:suff_cond} is sufficient and necessary when $\gamma=1$, as discussed in~\cite{feedbackNE, Salizzoni_2025_scalar}. We will assume from now on that condition~\eqref{eq:suff_cond} is valid, such that every FNE is also stabilizing. Therefore, to simplify the exposition and remain consistent with the standard terminology adopted in the LQ games literature, we will simply refer to stable FNE as FNE.

\begin{proposition}[Optimal problem statement]
    Let us consider a discounted $N$-player LQ game~\eqref{eq:stable_FNE_sys} such that condition~\eqref{eq:suff_cond} is valid, and the vector $k^\star=(k_i^\star, k_{-i}^\star)$. Then $k^\star$ is a FNE if and only if it is a solution to the following system
    \begin{subequations}
    \label{eq:optimal_system}
    \begin{align}
&S^\star =h_i^-(k_i^\star)+k_i^\star, \quad \forall i\in[N], \label{eq:optimal_system_1}\\
    &\left|S^\star+a\right|\leq \left|\sqrt{\bar\sigma+\frac{1}{\gamma}}-\sqrt{\bar\sigma}\right|,\label{eq:optimal_system_2}
\end{align}
\end{subequations}
where $\bar\sigma := \max_i \sigma_i$. 
\end{proposition}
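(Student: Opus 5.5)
The plan is to derive both directions from Lemma~\ref{lemma:factorization_sys}, Proposition~\ref{prop:problem_statement} and Proposition~\ref{prop:ONLY_ONE_ROOT}, using the function $z_i^\mathrm{s}(k_i)=a+k_i+h_i^-(k_i)$ from the proof of Proposition~\ref{prop:ONLY_ONE_ROOT}. The key observation is that along the stable branch the closed-loop coefficient is $a_\mathrm{cl}(k)=a+S=z_i^\mathrm{s}(k_i)$, where $S=k_i+S_{-i}$ denotes the total policy sum. This holds simultaneously for every player $i$.

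\emph{($\Rightarrow$)} Let $k^\star$ be a FNE. Under the standing assumption it is stable, so the necessity part~\eqref{eq:nec_cond} of Proposition~\ref{prop:ONLY_ONE_ROOT} gives $S^\star_{-i}=h_i^-(k_i^\star)$ for all $i$. Adding $k_i^\star$ to both sides yields~\eqref{eq:optimal_system_1}.

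For~\eqref{eq:optimal_system_2}, I would reuse the analysis of $z_i^\mathrm{s}$ from the proof of Proposition~\ref{prop:ONLY_ONE_ROOT}. It tends to $0$ as $k_i\to 0$ and as $|k_i|\to\infty$, and its only stationary points are $\pm\sqrt{\sigma_i}$, with values $\mp(\sqrt{\sigma_i+1/\gamma}-\sqrt{\sigma_i})$. Hence
\[
|a+S^\star|=|z_i^\mathrm{s}(k_i^\star)|\le \sqrt{\sigma_i+1/\gamma}-\sqrt{\sigma_i}
\]
for every $i$, with the bound attained, which explains the non-strict inequality. Next, the map $\sigma\mapsto\sqrt{\sigma+1/\gamma}-\sqrt{\sigma}=\frac{1/\gamma}{\sqrt{\sigma+1/\gamma}+\sqrt{\sigma}}$ is strictly decreasing. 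Taking the tightest of the $N$ bounds, attained at $\bar\sigma=\max_i\sigma_i$, gives~\eqref{eq:optimal_system_2}.

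\emph{($\Leftarrow$)} Suppose $k^\star$ solves~\eqref{eq:optimal_system}. Subtracting $k_i^\star$ from~\eqref{eq:optimal_system_1} gives $S^\star_{-i}=h_i^-(k_i^\star)$, so by Lemma~\ref{lemma:factorization_sys} we have $f_i(k_i^\star,k_{-i}^\star)=0$ for all $i$. Since condition~\eqref{eq:suff_cond} holds, the sufficiency part of Proposition~\ref{prop:ONLY_ONE_ROOT} yields $|a_\mathrm{cl}(k^\star)|<1$. Alternatively, I would check this directly from~\eqref{eq:optimal_system_2}: the inequality $\sigma>\frac{(1-\gamma)^2}{4\gamma^2}$ is equivalent to $\sqrt{\sigma+1/\gamma}-\sqrt{\sigma}<1$, and this passes to $\bar\sigma$ by the monotonicity above.

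Because $\gamma\le 1$, we then have $|a_\mathrm{cl}|<1\le 1/\sqrt{\gamma}$. The hypotheses of Proposition~\ref{prop:problem_statement} are therefore met, and $k^\star$ is a (stable) FNE.

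The main obstacle is making rigorous the claim that the stationary values of $z_i^\mathrm{s}$ are its global extrema over $k_i\neq0$. This needs the limit behaviour together with uniqueness of the critical points; I would first try to differentiate $h_i^-$ explicitly and show that the numerator of the derivative factors through $k_i^2-\sigma_i$. A secondary subtlety is that the converse direction relies on the sufficiency part of Proposition~\ref{prop:ONLY_ONE_ROOT}, i.e.\ on the standing assumption~\eqref{eq:suff_cond}; without it, a solution of~\eqref{eq:optimal_system} need not be stabilizing.
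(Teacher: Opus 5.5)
Your proposal is correct and essentially matches the paper's argument. You add $k_i^\star$ to the stable-branch condition $S_{-i}^\star=h_i^-(k_i^\star)$, bound $h_i^-(k_i)+k_i+a$ (your $z_i^\mathrm{s}$) by its extremal values $\mp(\sqrt{\sigma_i+1/\gamma}-\sqrt{\sigma_i})$ at $k_i=\pm\sqrt{\sigma_i}$, and keep the tightest bound, which by monotonicity is the one at $\bar\sigma$. You are also more explicit than the paper on the converse direction; there your direct check $\sqrt{\bar\sigma+1/\gamma}-\sqrt{\bar\sigma}<1$ is preferable to citing the sufficiency part of Proposition~\ref{prop:ONLY_ONE_ROOT}, since that proposition is stated for a $k^\star$ already assumed to be an FNE.
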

\begin{proof}
    Let us consider the optimal first order condition $S_{-i}^\star=h_i^-(k_i^\star)$ and let us add $k_i^\star$ to both sides of the equation $S_{-i}^\star + k_i^\star =h_i^-(k_i^\star)+k_i^\star$. By defining the RHS of the equation as $F_i(k_i^\star):=h_i^-(k_i^\star)+k_i^\star$, one can observe that $F_i(k_i)$ is a bounded function in $k_i$ with horizontal asymptote $y=-a$. Its extrema are attained for $k_i=\pm\sqrt{\sigma_i}$ such that $F_i(k_i)-a\in\left[\sqrt{\sigma_i}-\sqrt{\sigma_i+\frac{1}{\gamma}}, -\sqrt{\sigma_i}+\sqrt{\sigma_i+\frac{1}{\gamma}}\right]$. All admissible intersections between $F_i(k_i)$  and $y=S^\star$ lie in the same half-space of $\mathbb{R}$ for all $i$, leading to all players having the same sign at the FNE. For simplicity, we restrict the study of the function in the negative half-space, which corresponds to $a>0$. In order for every $F_i(k_i)$ to admit a real intersection with $y=S^\star$, observe that the quantity $\left|\sqrt{\sigma_i+\frac{1}{\gamma}}-\sqrt{\sigma_i}\right|$ is minimized when $\sigma_i=\bar\sigma:=\displaystyle\max_{i}\sigma_i$. Hence, a feasible solution exists if $\left|S^\star+a\right|\leq \left|\sqrt{\bar\sigma+\frac{1}{\gamma}}-\sqrt{\bar\sigma}\right|$, which defines the \emph{feasibility condition}. Finally, $S^\star\neq -a$, since substituting $S^\star=-a$ into~\eqref{eq:optimal_system_1} yields $k_i^\star=0$ as the unique solution.
\end{proof}
It is important to note that for $\gamma=1$, condition~\eqref{eq:optimal_system_1} is equivalent to the closed loop bound given in~\cite[Thm. $2$]{feedbackNE} for the undiscounted problem. Now, let us suppose there exists $S^\star$ such that condition~\eqref{eq:optimal_system_1} holds,
and let us study the expression $h_i^-(k_i)+k_i=S^\star$ which yields a FNE $k^\star=(k_i^\star, k_{-i}^\star)$. Given the presence of square-root terms, a symbolic solving procedure typically requires to square up both sides with the aim to obtain a polynomial expression. 
After few steps one obtains that the squared expression is a quadratic in $S^\star$, i.e., $\gamma k_i^2 \left(S^\star + a \right) + k_i \left( 1 - \left(S^\star+a\right)^2\right) + \gamma\sigma_i \left(S^\star+a\right)=0$, whose solutions are
\begin{equation} \label{eq:FNE_charact}      
k_{i,\pm}^\star(S^\star)=\frac{\gamma (S^\star+a)^2-1}{2\gamma(S^\star+a)}\pm\frac{\sqrt{\Delta^{(i)}(S^\star)}}{2\gamma(S^\star+a)},
\end{equation}
with $\Delta^{(i)}(S^\star):=\left(\gamma(S^\star+a)^2-1\right)^2-4\gamma^2\sigma_i(S^\star+a)^2$. Because they satisfy $k_{i,-}^\star(S^\star) k_{i,+}^\star(S^\star)=\sigma_i$ for every $i$, we will denote these solutions as \emph{hyperbolic}. Thus, for a fixed feasible $S^\star$, there exists a vector of $N$ sign assignments $\epsilon_j=(\epsilon_j(i))_i\in\{-1,1\}^N$ such that
\begin{small}
\begin{equation}\label{eq:canon_S}
\begin{aligned}    S^\star_{\epsilon_j}=\sum_{i=1}^Nk_{i,\epsilon_j(i)}^\star(S^\star) =& N\cdot\frac{\gamma (S^\star+a)^2-1}{2\gamma(S^\star+a)} \\
    &+\frac{1}{2\gamma(S^\star+a)}\sum_{i=1}^N\epsilon_j(i)\sqrt{\Delta^{(i)}(S^\star)}.
\end{aligned}
\end{equation}
\end{small}
By substituting the symbolic expression of $\Delta^{(i)}(S^\star)$ for every $i$, we obtain a one-variable expression which can be solved numerically. By counting every possible combination of sign assignments, there exist at most $\left|\{-1,1\}^N\right|=2^N$ distinct $S^\star$. However, as proven in~\cite{feedbackNE}, there are at most $2^N-1$ FNE, so that at least one combination is always infeasible. Based on this mathematical structure, we developed a numerical method that finds all FNE whose pseudo-code is given in Algorithm~\ref{alg:code}. The resulting scalar nonlinear equations are solved numerically using the \texttt{nsolve} routine of the SymPy library. In the undiscounted case $\gamma=1$, equations~\eqref{eq:FNE_charact} and~\eqref{eq:canon_S}$+a$ reduce, respectively, to the FNE and closed-loop characterization equations derived in \cite[Eq.~$(9)$ and Eq.~$(12)$]{feedbackNE}. Hence, the present formulation recovers the classical undiscounted setting as a special case while extending it to discounted LQ games with $\gamma<1$. In the next section, we focus on the special case $\sigma_i=\sigma\, \ \forall\, i\in[N]$, which we denote as \emph{symmetric setting}, as that will allow us to sharpen our results, and that will help improve the understanding of how multiple FNE arise.
\begin{algorithm}[t]
\caption{FNE-seeking method - LQ Game}
\label{alg:code}
\begin{algorithmic}
\State \textbf{Input: } $a, \gamma, (\sigma_i)_i$
\State \textbf{Output:}  List of FNE
\State Compute $2^N$ symbolic expressions of $\{S_{\epsilon_{j}}\}_j$~\eqref{eq:canon_S}
\For{each combination $\epsilon_j\in\{-1,1\}^N$}
\State Find root $S^\star$ : $S^\star -S^\star_{\epsilon_j}=0$
\If{$S^\star$ is feasible} 
\State Retrieve FNE $\left(k_{i, \epsilon_j(i)}^\star(S^\star)\right)_i$
\EndIf
\EndFor 
\end{algorithmic}
\end{algorithm}

\section{The Symmetric Setting}\label{sec:symmetric_setting}
Let $\sigma_i=\sigma_j$ for every $i, j\in[N]$ and refer to such a setting as symmetric. As already mentioned earlier, condition~\eqref{eq:suff_cond} is assumed to hold, such that every FNE is stabilizing. Similarly to what has been developed for the undiscounted continuous counterpart in \cite[Sec.~$4$]{ENGWERDA2016364}, and as an extension for the discrete-time discounted two-player game in~\cite[Sec.~$4.1$]{cavalagli2026}, the proposed setting yields closed-form solutions together with a hierarchical structure that shows how multiple FNE arise. The optimal system~\eqref{eq:optimal_system} reduces to 
\begin{subequations}
    \begin{align}        
        &S^\star = h^-(k_i)+k_i, \quad \forall i\in[N],\label{eq:optiaml_sys_symm}\\
        &\left|S^\star+a\right|\leq \left|\sqrt{\sigma+\frac{1}{\gamma}}-\sqrt{\sigma}\right|,\label{eq:feasibility_S_symm}
    \end{align}
\end{subequations}
where we drop the subscript in $h^-(k_i)$ as the agents are all equal. As already proved for the differential undiscounted counterpart in \cite[Thm. ~$4.3$]{ENGWERDA2016364}, the symmetric setting always admits a symmetric FNE with a closed-form expression. Let us formalize it in the next Theorem.
\begin{theorem}\label{thm:symm}
The symmetric discounted LQ game always admits a unique symmetric stable FNE $k^\star_\mathrm{s}=(k_\mathrm{s}, \ldots, k_\mathrm{s})\in\mathbb{R}^N$, where
    \begin{equation}\label{eq:symm_FNE}
        \displaystyle k_\mathrm{s}= -\frac{1}{3}\left( S_N \omega + U_N+ \frac{p_N}{S_N \omega}\right),\quad \omega = -\frac{1}{2} - \frac{\sqrt{3}}{2} i
    \end{equation}
    with   
    \begin{align*}
    p_N=& - \frac{3 \left(N \gamma \sigma + a^{2} \gamma - 1\right)}{N^{2} \gamma + N \gamma} + \frac{(2Na+a)^2}{(N^{2}+N)^2}, \\
    q_N=& - \frac{27 a \sigma}{N^{2}+N} \\
    &- \frac{9 (2Na+a)\left(N \gamma \sigma + a^{2} \gamma - 1\right)} {(N^{2}+N)(N^{2}\gamma+N\gamma)} +\frac{2(2Na+a)^3}{(N^{2}+N)^3},
    \end{align*}    
    and 
    \begin{equation*}
        \Delta_N = \sqrt{\,4 p_N^{3} + q_N^{2}\,},\ S_N = \sqrt[3]{\frac{-q_N + \Delta_N}{2}},\ U_N = \frac{2Na+a}{N^{2}+N}.
    \end{equation*}
\end{theorem}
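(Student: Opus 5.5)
The plan is to reduce the symmetric best-response system to a single cubic in the common gain, solve it by Cardano's formula, and then use the branch structure of Proposition~\ref{prop:ONLY_ONE_ROOT} to show that exactly one real root is a stable FNE, namely the root written in \eqref{eq:symm_FNE}.

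\emph{Reduction to a cubic.} Impose $k_i=k$ for all $i$, so that $S_{-i}=(N-1)k$ for every player. The $N$ first-order conditions $f_i=0$ in \eqref{eq:best_response} then collapse to one scalar equation $f(k)=0$, and $f$ is a cubic polynomial in $k$. Expanding gives the following coefficients.
\begin{itemize}
\item The coefficient of $k^3$ is $-\gamma N(N-1)$.
\item The coefficient of $k^2$ is $-\gamma a(2N-1)$.
\item The coefficient of $k$ is $\gamma(N\sigma-a^2)+1$.
\item The constant term is $\gamma\sigma a$.
\end{itemize}
Dividing by the leading coefficient gives a monic cubic $k^3+Bk^2+Ck+D=0$. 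I expect its coefficients to match the quantities $U_N$, $p_N$, $q_N$ in the statement. The denominators $N^2+N$ and numerators $2Na+a$ suggest the statement counts players with an index shift ($N\mapsto N+1$), which I would reconcile explicitly. The substitution $k=t-U_N/3$, with $U_N$ equal to the $k^2$ coefficient, depresses the cubic to $t^3+pt+q=0$, with $p,q$ proportional to $p_N,q_N$. Cardano's formula then writes the three roots as $-\tfrac13\big(S_N\omega^j+U_N+p_N/(S_N\omega^j)\big)$ for $j=0,1,2$, where $\omega$ is a primitive cube root of unity. This is a routine but careful symbolic computation.

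\emph{Existence of a stable symmetric root.} By Proposition~\ref{prop:ONLY_ONE_ROOT}, a stable FNE must lie on the $h^-$ branch. I would therefore look for a root of $G(k):=h^-(k)+k-Nk$, which encodes $S^\star=h^-(k)+k$ with $S^\star=Nk$. A Taylor expansion of the square root in \eqref{eq:roots} gives $h^-(k)\to -a$ as $k\to0$, so $G(0^\pm)=-a$. As $|k|\to\infty$ on the half-line of sign $-\operatorname{sign}(a)$, the term $-(N-1)k$ dominates and $G$ takes the sign opposite to $-a$. The intermediate value theorem then yields a root $k_{\mathrm s}$ with $\operatorname{sign}(k_{\mathrm s})=-\operatorname{sign}(a)$. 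By the standing assumption \eqref{eq:suff_cond} and Proposition~\ref{prop:ONLY_ONE_ROOT}, this root is a stable FNE, and it is also a root of the cubic.

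\emph{Uniqueness and branch selection.} I expect this step to be the main obstacle. It requires two things: showing that the other two roots of the cubic are either complex, of the wrong sign, or on the $h^+$ branch (hence unstable), and identifying which Cardano branch gives $k_{\mathrm s}$.
\begin{itemize}
\item For uniqueness, I would first try to show that $G$ is strictly monotone on the relevant half-line. $F(k)=h^-(k)+k$ is bounded, with extrema at $\pm\sqrt{\sigma}$, by the previous proposition, while $Nk$ is strictly monotone. If that fails, I would use Vieta: the product of the roots is $\sigma a/(N(N-1))$ up to sign, and the sum is $-B$. Together with the sign information, this should show that any second real root of the same sign violates the feasibility bound \eqref{eq:feasibility_S_symm}, or equivalently lies on $h^+$, where $|z^{\mathrm u}|>1$ by the proof of Proposition~\ref{prop:ONLY_ONE_ROOT}.
\item For the branch, I would evaluate the three Cardano expressions in a limiting regime, for instance $\sigma\to\infty$ or $a\to0^+$, where the stable root is explicit. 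This should confirm that the root in \eqref{eq:symm_FNE} is the one with $\omega=-\tfrac12-\tfrac{\sqrt3}{2}i$. A continuity argument in the parameters then extends the identification everywhere, since the selected root never collides with another one while it stays on $h^-$.
\end{itemize}
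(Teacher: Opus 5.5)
Your reduction is correct: your $f$ is exactly $-\mathcal{C}$, the cubic in the paper's appendix. Your existence step, the intermediate value theorem applied directly to $G(k)=h^-(k)-(N-1)k$, is also sound, and it avoids the squaring the paper has to control. (Minor slip: $h^-(k)=-a-k+O(1/k)$ is unbounded, so the dominant term of $G$ at infinity is $-Nk$; the sign conclusion is unaffected.)

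The genuine gap is uniqueness, which you flag but do not close. The facts you cite do not give monotonicity. For $a>0$, on $(-\infty,-\sqrt{\sigma})$ both $F=h^-+k$ and $Nk$ are increasing, so boundedness of $F$ and the location of its extrema say nothing about the sign of $G'=F'-N$ there. That is where the root lies when $a>(N-1)\sqrt{\sigma}+\sqrt{\sigma+1/\gamma}$. The Vieta fallback is not an argument either: the sum and product of the roots do not tell you on which branch a second negative root lies.

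The paper closes this by separating the roots. Set $L(k)=(1-2N)k^2-2ak+\sigma$ and $R(k)=\gamma^2(k^2+\sigma)^2+4\gamma k^2$; the $h^-$ condition then reads $\gamma L(k)=\sqrt{R(k)}$. So for $a>0$ every admissible root has $L(k)\ge0$ and $k<0$ (because $F(k)+a$ has the sign of $-k$). That is, it lies in $[\bar k,0)$ with $\bar k=\frac{-a-\sqrt{a^2+(2N-1)\sigma}}{2N-1}$. The paper shows $\mathcal{C}(\bar k)>0$ via a sum of squares; this is also immediate from the identity $4\gamma k\,\mathcal{C}(k)=\gamma^2L(k)^2-R(k)$. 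Together with $\mathcal{C}(0)=-a\gamma\sigma<0$ and a positive leading coefficient, the three real roots lie one in each of $(-\infty,\bar k)$, $(\bar k,0)$ and $(0,\infty)$, and only the middle one is admissible.

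Alternatively, your first route works once you add an explicit slope bound. One has $F'(k)=\frac{2\gamma(k^2-\sigma)}{Q\sqrt{R}}$ with $Q=\gamma(\sigma+k^2)+\sqrt{R}$. Since $Q\sqrt{R}\ge R\ge4\gamma k^2$, this gives $F'\le\frac12$, so $G'\le\frac12-N<0$. Hence $G$, extended by $G(0)=-a$, is strictly decreasing on all of $\mathbb{R}$.

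Your branch identification needs the same separation. Once you know the cubic always has three distinct real roots with the stable one in the middle (for $a<0$ use the symmetry $(k,a)\mapsto(-k,-a)$), no limiting regime or continuity argument is required. Write the monic cubic as $k^3+Bk^2+Ck+D$, with $\Delta_0=B^2-3C$ and $\Delta_1=2B^3-9BC+27D$. The radicand $(\Delta_1+\sqrt{\Delta_1^2-4\Delta_0^3})/2$ then lies in the open upper half-plane, and its principal cube root $\rho$ has argument in $(0,\pi/3)$. Among $-\frac13(B+\zeta\rho+\Delta_0/(\zeta\rho))$ with $\zeta^3=1$, the choice $\zeta=-\frac12-\frac{\sqrt3}{2}i$ always gives the middle root. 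Without the separation, ``never collides while on $h^-$'' is not enough: the other two roots could merge and turn complex, and the principal branches can then relabel the roots.

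Finally, the printed $U_N,p_N,q_N$ will not match your cubic through an index shift. Under $N\mapsto N+1$ its linear coefficient $a^2\gamma-N\gamma\sigma-1$ becomes $a^2\gamma-(N+1)\gamma\sigma-1$, whereas $p_N$ contains $N\gamma\sigma+a^2\gamma-1$. The closed form therefore has to be re-derived from the cubic; the paper itself only invokes ``Cardano's procedures'' at this point.
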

\begin{proof}
    The proof is given in Appendix~\ref{appendix:thm_symm}.
\end{proof}
Let us assume next that $\exists\  i\in[N]:k_i\neq k_j$ for some $j\in[N]$ such that conditions~\eqref{eq:feasibility_S_symm}-\eqref{eq:optiaml_sys_symm} hold. Let us consider the equation $S^\star=h^-(k_i)+k_i$ and, as we did in Section~\ref{sec:general_setting}, square the expression and obtain a quadratic admitting the hyperbolic roots $k_{i, \pm}^\star(S^\star)$ of~\eqref{eq:FNE_charact}. Because each player admits the same cost value $\sigma_i=\sigma$, then the subscript is no longer needed: $k_{i, \pm}^\star(S^\star)=k_\pm^\star(S^\star)$ for every $i$. As a result, there exists an index $p\in[N-1]=\{1,\ldots,N-1\}$ such that
$S^\star$ from~\eqref{eq:canon_S} is the sum of $p$-times $k_+^\star(S^\star)$ and $(N-p)$-times $k_-^\star(S^\star)$, i.e., 
\begin{equation}\label{eq:S_p}
    S_p^\star = p\cdot k_+^\star(S^\star_p) + (N-p)\cdot k_-^\star(S^\star_p),
\end{equation}
where we use the subscript $S_p^\star$ to remark the dependence on $p$. To further simplify notation, let us consider the set of indices of the players that choose policy $k_+^\star(S^\star)$ as $\mathcal{I}^+=\{i\in[N]\ |\ k_i=k_+^\star(S^\star) \}$, and a complementary set of players choosing policy $k_-^\star(S^\star)$ as $\mathcal{I}^-=\{i\in[N]\ |\ k_i=k_-^\star(S^\star) \}$. 
One can observe that equation~\eqref{eq:S_p} depends only on the number $p$ of players selecting $k_+^\star(S_p^\star)$, and not on their specific indices. Therefore, every possible assignment of $p$ players to the set $\mathcal I^+$ and the remaining $N-p$ players to $\mathcal I^-$ yields the same aggregate value $S_p^\star$. Nevertheless, each assignment corresponds to a distinct FNE, since the resulting feedback vector $k^\star$ differs by permutation of its components. As a consequence, every solution $S_p^\star$ generates $\binom{N}{p}$ distinct FNE.


Thus, computing the FNE narrows down to finding $S^\star_p$ as a root of Equation~\eqref{eq:S_p} for every $p\in[N-1]$, which, however, cannot be solved in closed-form. By underlining the fact that the set of all solutions can be found by Algorithm~\ref{alg:code}, we propose an alternative numerical method based on how multiple FNE hierarchically arise. 
While the previous numerical method computes all admissible values of $S^\star$, the alternative approach exploits the hyperbolic structure of the equilibria and directly searches for the pairs $k_\pm^\star(S_p^\star)$ associated with the same aggregate value $S_p^\star$. To that end, let us take the $i$-th equation of~\eqref{eq:optiaml_sys_symm}, consider a global $k_i=k\in\mathbb{R}$, substitute the relation $k_+^\star=k, k_-^\star=\frac{\sigma}{k}$ inside~\eqref{eq:S_p}, and obtain the following 
\begin{equation}
    \eqref{eq:optiaml_sys_symm}\underset{\forall p\in[N-1]}{\iff}\ h^-(k)+k-S_p(k)=\Phi_p(k)=0,
\end{equation}
with 
\begin{equation}\label{eq:phi_k}
\begin{aligned}
     \Phi_p(k)=& \frac{- 2 N \gamma \sigma - 2 a \gamma k + \gamma \left(\sigma + k^{2}\right) }{2 \gamma k}\\
     & +\frac{p \left(2 \gamma \sigma - 2 \gamma k^{2}\right)-\sqrt{\gamma^2\left(k^2+\sigma\right)^2+4\gamma k^2}}{2 \gamma k}. 
\end{aligned}   
\end{equation}
For every $p\in[N-1]$, we numerically solve~\eqref{eq:phi_k} for a coordinate $k$, retrieve its hyperbolic counterpart $\sigma/k$, and generate $\binom{N}{p}$ distinct FNE through permutation of the two policies over the $N$ players. As illustrated in Figure~\ref{fig:phi_p}, numerical evidence suggests that the equation $\Phi_p(k)=0$ typically admits two distinct roots, each defining a different hyperbolic pair $\left(k,\sigma/k\right)$ and therefore a different combinatorial family of equilibria. An exception arises when $N$ is even and $p=N/2$, where the two hyperbolic branches collapse into the same family of equilibria. As the observed behavior cannot currently be established analytically, we state it as an assumption.\begin{assumption}\label{ass:phi_roots}
For every $p\in[N-1]$, the scalar equation $\Phi_p(k)=0$ admits two admissible roots, each associated with a hyperbolic pair $\left(k,\sigma/k\right)$. Moreover, when $N$ is even and $p=N/2$, the two hyperbolic branches generate the same family of equilibria.
\end{assumption}
In the next Theorem, we formalize the resulting hierarchical structure under the presented assumption together with a sufficient and necessary condition, when $N$ is even, and a sufficient condition, when $N$ is odd, for the existence of the maximal number $2^N-2$ of hyperbolic equilibria, which together with the symmetric FNE yields $2^N-1$ solutions. This recovers the multiplicity result of~\cite[Thm.~$2$]{feedbackNE} for the discrete-time undiscounted game and of~\cite[Thm.~$4.3$]{ENGWERDA2016364} for the undiscounted differential game.
\begin{theorem}
Under Assumption~\ref{ass:phi_roots}, the symmetric discounted LQ game admits up to
\begin{equation}\label{eq:cardinality}
    M=
    \displaystyle2\sum_{z=1}^{\hat{p}} \binom{N}{z}
    -\mathbf{1}_{\{N\text{ even, }\hat{p}=N/2\}}
    \binom{N}{N/2}
\end{equation}
hyperbolic FNE, where $\hat{p}:=\displaystyle\max_{\{1,\ldots,\lfloor N/2\rfloor\}}p$ such that
\begin{equation}\label{eq:condition_multiplicity}
    \Phi_{\hat{p}}(\bar{k}_{\hat{p}})\mathrm{sign}(-a)\geq0\quad \text{with }\ \bar{k}_{\hat{p}}:\ \Phi_p'(\bar{k}_{\hat{p}})=0.
\end{equation}

Moreover, the game admits the maximal number $2^N-2$ of hyperbolic FNE if
\begin{equation}\label{eq:N_even}
    |a| \ge \sqrt{\sigma}(N-1)+\sqrt{\sigma+\frac{1}{\gamma}}.
\end{equation}
If $N$ is even, condition~\eqref{eq:N_even} is also necessary.
\end{theorem}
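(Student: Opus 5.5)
The proof will treat the counting statement and the maximal-multiplicity condition separately.

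\emph{Counting.} The count follows from the combinatorial structure set up just before Assumption~\ref{ass:phi_roots}. A hyperbolic FNE is determined by three data: an index $p\in[N-1]$, an admissible root $k$ of $\Phi_p(k)=0$, and the choice of which $p$ players receive $k$ while the rest receive $\sigma/k$.

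First, we exploit the symmetry $p\leftrightarrow N-p$. Replacing $k$ by $\sigma/k$ exchanges the roles of $\mathcal I^+$ and $\mathcal I^-$, so a root of $\Phi_p$ gives a root of $\Phi_{N-p}$ with the same family of equilibria. It therefore suffices to consider $p\le\lfloor N/2\rfloor$, as long as both roots of each $\Phi_p$ are counted. Under Assumption~\ref{ass:phi_roots}, each such $p$ contributes $2\binom{N}{p}$ equilibria. The exception is $N$ even with $p=N/2$, where the two branches coincide and contribute only $\binom{N}{N/2}$; this produces the indicator correction in~\eqref{eq:cardinality}.

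Second, we show the existence of the two roots is monotone in $p$: if the roots exist for $p$, they exist for every smaller $p$. Write $\Phi_p(k)=\Phi_0(k)+p\,(\sigma-k^2)/k$, using~\eqref{eq:phi_k}. On the relevant half-line the sign of $k$ is fixed by $-a$, as in the proof of the optimal problem statement. There $(\sigma-k^2)/k$ has a definite sign near the extremum $\bar k_p$, so the extremal value $\Phi_p(\bar k_p)$ moves monotonically in $p$. Combined with the asymptotics of $\Phi_p$ as $k\to0$ and $|k|\to\infty$, which we obtain from the leading terms, the function has a single interior extremum. By the intermediate value theorem, two roots exist exactly when this extremum lies on the correct side of zero, which is condition~\eqref{eq:condition_multiplicity}. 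Hence the admissible $p$ form an initial segment $\{1,\dots,\hat p\}$, and summing the contributions gives $M$.

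\emph{Maximal multiplicity.} We need $\hat p=\lfloor N/2\rfloor$. A direct check shows that~\eqref{eq:cardinality} then equals $\sum_{z=1}^{N-1}\binom Nz=2^N-2$ in both parities. To get a tractable condition, we use the feasibility bound~\eqref{eq:feasibility_S_symm} instead of the implicit $\bar k_p$. The most demanding configuration is the one whose aggregate $S_p^\star$ must reach farthest from $-a$. We bound $S_p^\star+a$ using the extremal values of the policies at $k=\pm\sqrt\sigma$, namely $\sqrt\sigma$ and $\sqrt{\sigma+1/\gamma}$, which come from the proof of Proposition~\ref{prop:ONLY_ONE_ROOT}. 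This yields the requirement that $|a|$ dominate $(N-1)$ copies of $\sqrt\sigma$ plus one copy of $\sqrt{\sigma+1/\gamma}$, which is~\eqref{eq:N_even}. Substituting $|a|\ge\sqrt\sigma(N-1)+\sqrt{\sigma+1/\gamma}$ into $\Phi_p(\bar k_p)$, and using that $\bar k_p$ is close to $\sqrt\sigma$, should give the sign condition~\eqref{eq:condition_multiplicity} for every $p\le\lfloor N/2\rfloor$.

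For necessity when $N$ is even, the worst case is $p=N/2$. There the configuration is balanced: $\Phi_{N/2}$ is invariant under $k\mapsto\sigma/k$, so $\bar k_{N/2}=\sqrt\sigma$ exactly. The extremal value $\Phi_{N/2}(\sqrt\sigma)$ can then be computed in closed form, and~\eqref{eq:condition_multiplicity} becomes equivalent to~\eqref{eq:N_even}. For odd $N$ the critical point $p=(N-1)/2$ is not balanced, so $\bar k_p\neq\sqrt\sigma$ and only sufficiency survives.

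The main obstacle is the sufficiency step: showing rigorously that $\Phi_p$ has a single extremum and that~\eqref{eq:N_even} forces its sign for all $p\le\lfloor N/2\rfloor$. I would first try to prove that $p\mapsto\mathrm{sign}(-a)\,\Phi_p(\bar k_p)$ is decreasing. That reduces the question to the worst case $p=\lfloor N/2\rfloor$, where the extremum can be bounded by evaluating at $k=\sqrt\sigma$.
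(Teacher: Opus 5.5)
\paragraph{Overall.} Your counting argument follows the paper's route: the shift identity $\Phi_{p+1}-\Phi_p=\sigma/k-k$, the symmetry $\Phi_p(\sigma/k)=\Phi_{N-p}(k)$, divergence at both ends of the admissible half-line, and the stationary-value criterion. Your invariance argument for $\Phi_{N/2}$ is a cleaner justification than the paper's ``verify symbolically'' that $\pm\sqrt{\sigma}$ are stationary points. However, the relevant point is $\mathrm{sign}(-a)\sqrt{\sigma}$, i.e.\ $-\sqrt{\sigma}$ for $a>0$, not $\sqrt{\sigma}$.

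\paragraph{The gap: sufficiency of \eqref{eq:N_even}.} The feasibility bound cannot deliver it. Every solution of $S=h^-(k)+k$ satisfies it automatically, so it is a property of equilibria that already exist, not a criterion producing roots of $\Phi_p$. Likewise, ``$\bar k_p$ is close to $\sqrt{\sigma}$'' is neither proved nor needed. The missing observation is that $\sigma/k-k$ vanishes at $k=\pm\sqrt{\sigma}$, so all $\Phi_p$ take the same value there. For $a>0$,
\[
\Phi_p(-\sqrt{\sigma})=-a+(N-1)\sqrt{\sigma}+\sqrt{\sigma+1/\gamma}\qquad\text{for every }p\in[N-1].
\]
Since $\bar k_p$ minimizes $\Phi_p$ on $k<0$, you get $\Phi_p(\bar k_p)\le\Phi_p(-\sqrt{\sigma})\le 0$ under \eqref{eq:N_even} for all $p$ at once. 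This needs no monotonicity in $p$ and no information on where $\bar k_p$ lies. It is exactly what the paper uses: equality for $N$ even because $\bar k_{N/2}=-\sqrt{\sigma}$, and the strict inequality $\Phi_{\lfloor N/2\rfloor}(\bar k_{\lfloor N/2\rfloor})<\Phi_{\lfloor N/2\rfloor}(-\sqrt{\sigma})$ for $N$ odd. Your last paragraph points this way but stops before computing the value at $\pm\sqrt{\sigma}$, and that computation is what makes the argument work.

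\paragraph{Closing the other steps.} The other steps you flag are a single extremum, and $\bar k_{p+1}$ lying where $(\sigma-k^2)/k$ has a fixed sign. The paper also asserts these without proof, and both can be closed. For $a>0$ put $t=-k>0$, $v=t+\sigma/t$, and $g(v)=\tfrac12\bigl(\sqrt{v^2+4/\gamma}-v\bigr)$. Then $\Phi_p=-a+g(v)+pt+(N-p)\sigma/t$ and
\[
\frac{d^2\Phi_p}{dt^2}=g''(v)\,v'(t)^2+\frac{2\sigma}{t^3}\bigl(g'(v)+N-p\bigr)>0,
\]
because $g''>0$ and $g'\in(-\tfrac12,0)$. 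So $\Phi_p$ is strictly convex in $t$, with a unique minimizer $t_p$. Moreover $\frac{d\Phi_p}{dt}(\sqrt{\sigma})=2p-N$, so $t_p\ge\sqrt{\sigma}$ for $p\le N/2$. In that region $\Phi_{p+1}-\Phi_p=t-\sigma/t\ge 0$. Hence $\min\Phi_p\le\Phi_p(t_{p+1})\le\min\Phi_{p+1}$ whenever $p+1\le N/2$, which is your monotonicity.

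\paragraph{Boundary case.} At equality in \eqref{eq:N_even} the common value vanishes. The root $k=-\sqrt{\sigma}$ then satisfies $k=\sigma/k$, so it yields the symmetric FNE rather than a hyperbolic one; for $N=2$, no hyperbolic FNE remain at all. The count $2^N-2$ therefore really requires strict inequality. This is a defect of the statement itself, not only of your proof.
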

\begin{figure}[t!]
    \centering
    \includegraphics[width=\linewidth]{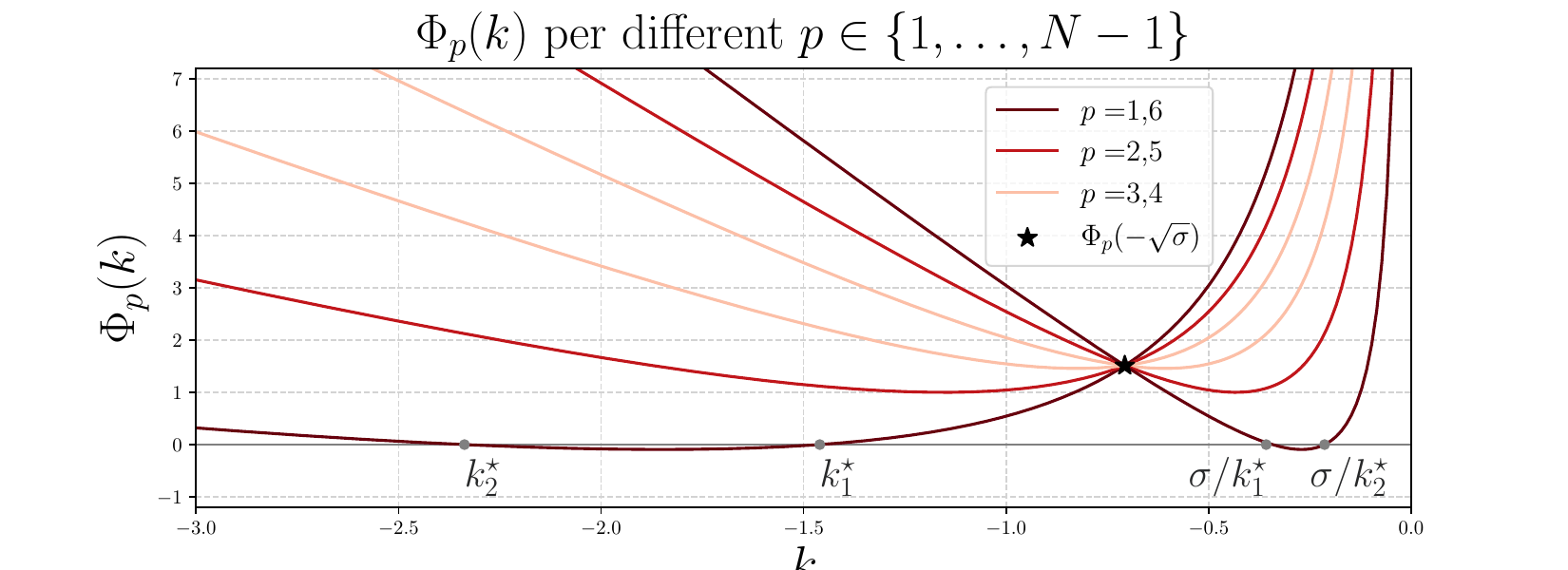}
    \caption{Visualization of $\Phi_p(k)$~\eqref{eq:phi_k} per different values of $p\in[N-1]$. The intersection with the $k$-axis yields a couple of stable policies $(k^\star_{1,p}, k^\star_{2,p})$ to permute with their corresponding hyperbolic value over the $N$-vector. }
    \label{fig:phi_p}
\end{figure}
\begin{proof}
For every $p\in[N-1]$, the function $\Phi_p(k)$ is unbounded, since $\displaystyle\lim_{k\to0}\Phi_p(k)=\pm\infty$ and $\displaystyle\lim_{k\to\pm\infty}\Phi_p(k)=\mp\infty$. 
Moreover, straightforward algebraic manipulations yield, for every $p\in[N-2]$,
\begin{equation}\label{eq:prop_phi}
    \Phi_{p+1}(k)-\Phi_p(k)=\frac{\sigma}{k}-k,
    \quad
    \Phi_p\left(\frac{\sigma}{k}\right)=\Phi_{N-p}(k).
\end{equation}

The first identity shows that the family $\{\Phi_p\}_{p=1}^{N-1}$ is ordered through a hyperbolic shift, while the second establishes a symmetry between the indices $p$ and $N-p$. As a consequence, it is sufficient to study the first $\lfloor N/2\rfloor$ functions. Under Assumption~\ref{ass:phi_roots}, every admissible root of $\Phi_p(k)=0$ generates a hyperbolic pair $(k,\sigma/k)$ and therefore a corresponding combinatorial family of equilibria. Hence, additional hyperbolic FNE arise whenever $\Phi_p(k)$ crosses the horizontal axis. Since $\Phi_p(k)$ is continuous on each half-space and diverges at the boundaries, the latter condition occurs whenever a stationary point $\bar{k}_p$, satisfying $\Phi_p'(\bar{k}_p)=0$, satisfies $\Phi_p(\bar{k}_p)\,\mathrm{sign}(-a)\ge0$. By defining $\hat p$ as the largest index satisfying the mentioned condition, summation over all admissible combinatorial families yields~\eqref{eq:cardinality}. Consider next the maximal multiplicity case. Let $a>0$, since the argument for $a<0$ is symmetric. If $N$ is even, then $\lfloor N/2\rfloor=N/2$, and one can verify symbolically that the solution of $\Phi_{N/2}^\prime(\overline{k}_{N/2})=0$ is attained at $\overline{k}_{N/2}=-\sqrt{\sigma}$, a common stationary point of the family of function $\{\Phi_p\}_{p=1}^{N-1}$. As we are able to retrieve a closed-form expression of $\Phi_{N/2}(-\sqrt{\sigma})$, and by combining condition~\eqref{eq:condition_multiplicity}, we obtain the claim, i.e., 
\begin{equation*}
    \Phi_{N/2}(-\sqrt{\sigma})=-a+\sqrt{\sigma}(N-1)+\sqrt{\sigma+\frac1\gamma}\leq0.
\end{equation*}

If $N$ is odd, then
$\Phi_{\lfloor N/2\rfloor}(\bar{k}_{\lfloor N/2\rfloor})<\Phi_{\lfloor N/2\rfloor}(-\sqrt{\sigma})$,
so the same condition remains sufficient but is no longer necessary. We remark that in order to obtain a necessary condition for case of $N$ being odd, one should solve inequality~\eqref{eq:condition_multiplicity}, which cannot be done in closed-form. 
\end{proof}
A pseudo-code for computing all hyperbolic FNE is provided in Algorithm~\ref{alg:symm_FNE}. The search is based on the iterative verification of Condition~\eqref{eq:condition_multiplicity}. If the latter is not valid, then the loop is stopped as additional solutions cannot exist. Whenever the algorithm is called, the symmetric stable FNE is retrieved by simply substituting the game parameters inside the closed-form solution \eqref{eq:symm_FNE}.

\begin{algorithm}
\caption{FNE-seeking method - Symmetric LQ Game}
\label{alg:symm_FNE}
\begin{algorithmic}
\State \textbf{Input: } $a, \gamma, \sigma$
\State \textbf{Output:}  List of FNE
\While{condition~\eqref{eq:condition_multiplicity} holds and $p\leq\lfloor N/2\rfloor$}
\State Substitute $a, \gamma, \sigma$ into the expression $\Phi_p(k)=0$~\eqref{eq:phi_k}
\State Find a couple of roots $\left(k^\star_{1,p}, k^\star_{2,p}\right)$ 
\State Retrieve new batch of FNE
\EndWhile 
\end{algorithmic}
\end{algorithm}

\begin{figure}[b!]
    \centering
    \includegraphics[width=\linewidth]{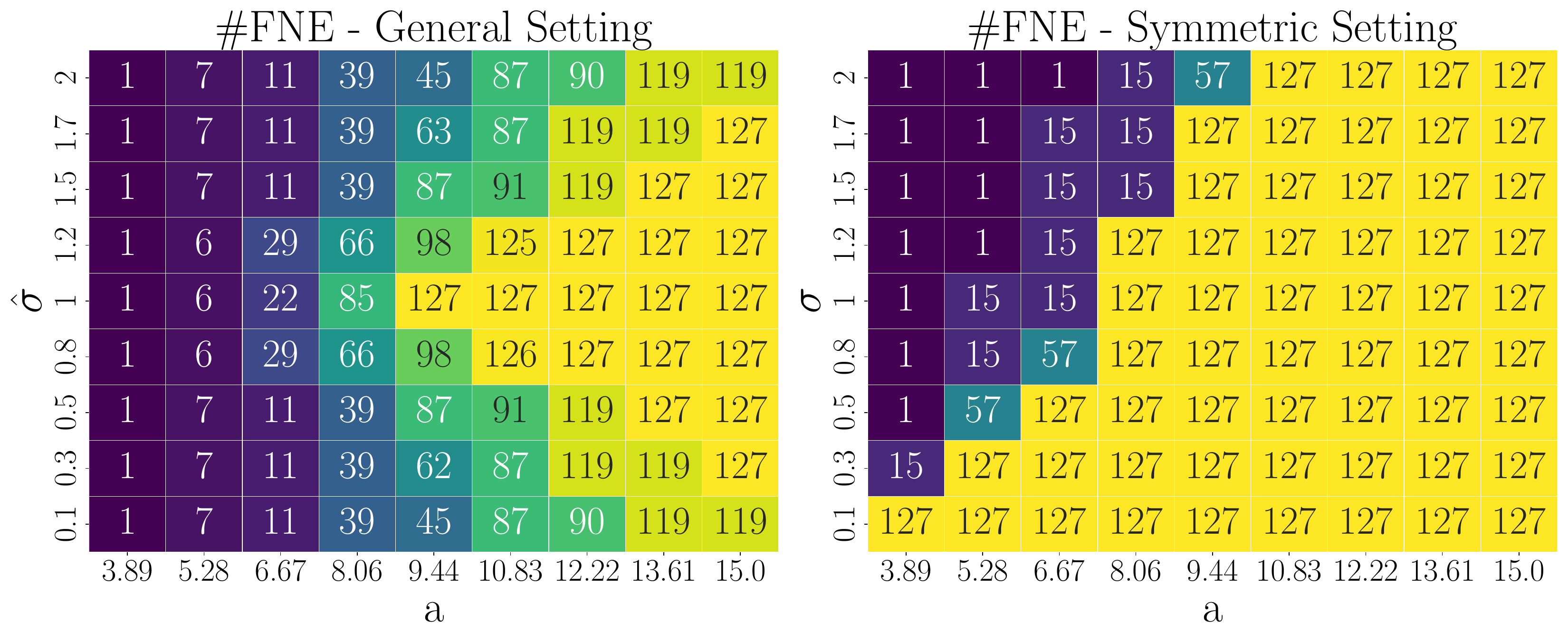}
    \caption{Number of stable FNE of different $7$-players game configurations computed by Algorithm \ref{alg:code} and Algorithm \ref{alg:symm_FNE}. As shown in each cell, the number of equilibria ranges from $1$ to a maximum of $2^N-1=127$.}
    \label{fig:n_FNE}
\end{figure}

\section{Numerical Results}\label{sec:numerical_results}
We test the algorithms to visualize the number of FNE arising across different game configurations. For a fixed game with $N=7$ players, we consider cost vectors of the form $(\hat{\sigma}, \hat{\sigma}, \hat{\sigma}, 0.5, \tilde{\sigma},\tilde{\sigma}, \tilde{\sigma})$, where $\hat{\sigma},\tilde{\sigma}\in\mathcal{T}=\{0.1,0.3,0.5,0.8,1,1.2,1.5,1.7,2\}$ and for every $\hat{\sigma}$ as the $j$-th element of $\mathcal{T}$, we set $\tilde{\sigma}$ as the $(\left|\mathcal{T}\right|-j)$-th element so as to highlight the effect of cost heterogeneity. Figure~\ref{fig:n_FNE} shows that equilibrium multiplicity is largest when players have similar costs, while it decreases as the game becomes more heterogeneous and some players become more conservative (larger $\sigma_i$). It is important to note that Algorithm~\ref{alg:code} has exponential complexity $\mathcal{O}(2^N)$, making it practical only for moderate values of $N$, while still extending beyond the small-scale settings discussed in~\cite{Salizzoni_2025_scalar}. By contrast, Algorithm~\ref{alg:symm_FNE} has lower complexity $\mathcal{O}(\lfloor N/2\rfloor)$, but applies only to symmetric games. We then investigate the role of the discount factor by fixing $N=7$, $a=5$, and cost vector $(0.1,0.1,0.1,0.15,0.2,0.2,0.2)$, for which condition~\eqref{eq:suff_cond} is violated when $\gamma\in[0.1,0.5]$. As shown in Figure~\ref{fig:gammas}, low discount factors yield finite-cost but non-stabilizing equilibria, while increasing $\gamma$ makes the number of stable FNE and FNE coincide. These numerical results highlight the importance of keeping the discount factor explicit and motivate the distinction between stable FNE and FNE when condition~\eqref{eq:suff_cond} is violated.
\section{Conclusions}
For the class of scalar infinite-horizon discounted $N$-player games, we distinguish FNE from stable FNE, showing that discounting may generate finite-cost equilibria that do not stabilize the system. We provide a sufficient condition ensuring that all FNE are stabilizing, derive a parametric characterization of all equilibria together with a numerical method for their computation, and specialize the analysis to the symmetric setting, where a closed-form symmetric FNE and multiplicity conditions are obtained. Numerical results illustrate how equilibrium multiplicity depends on the game parameters and how violating the stability condition may lead to finite-cost non-stabilizing equilibria.
\section*{Acknowledgment}
The authors would like to thank Prof. Tatarenko for valuable discussions during the development of this work.

\begin{figure}
    \centering
    \includegraphics[width=\linewidth]{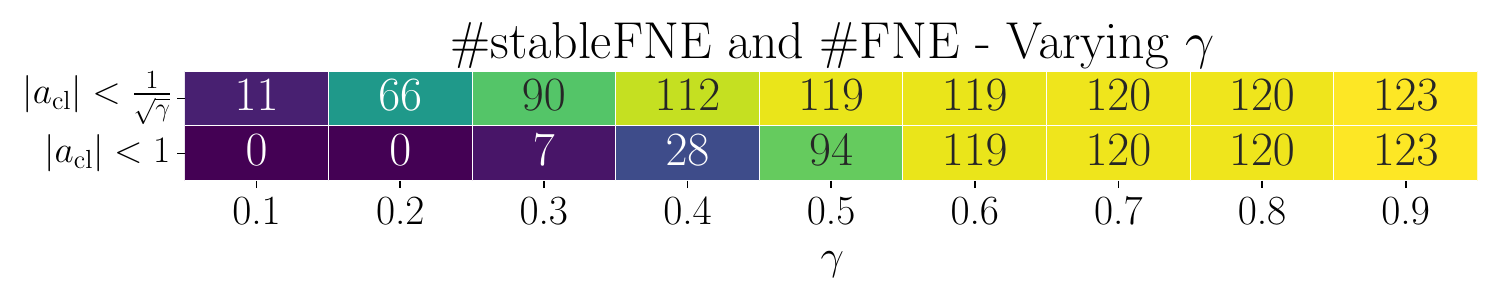}
    \caption{Number of stable FNE and FNE for the discounted LQ game. For a fixed configuration of $(N, a, \sigma_i)$, we vary $\gamma$ such that for $\gamma\in[0.1, 0.5]$ condition~\eqref{eq:suff_cond} is not satisfied, leading to equilibria with finite cost that do not stabilize the system. It is worth noting that the violation of~\eqref{eq:suff_cond} does not generally imply a discrepancy between the number of stable FNE and FNE since the condition is only sufficient. }
    \label{fig:gammas}
\end{figure}

\appendix
\subsection{Proof of Theorem~\ref{thm:symm}}\label{appendix:thm_symm}
Let us consider the reduced optimal system~\eqref{eq:optiaml_sys_symm}, and exploit symmetry, i.e., $k_i=k\in\mathbb{R}$, which entails $S=Nk$. Let us substitute the latter in $h^-(k)+k=S$ and multiply it by $2\gamma k$, which is non-zero by assumption. Let us isolate the square root on one side of the equation
\begin{equation}\label{eq:pre_cubic}
     \gamma \left[k^2\left(1-2N\right)- 2 a  k + \sigma \right]= \sqrt{\gamma^2\left(k^2+\sigma\right)^2+4\gamma k^2},
\end{equation}
with the aim to square both sides. Before doing so, we must take into account the positive domain of the left side of the expression, as squaring will include non-stable solutions. The expression $ \left[k^2\left(1-2N\right)- 2 a  k + \sigma \right]$ is positive if
\begin{equation}\label{eq:dom_cubic}
    k\in\left[ \frac{-a-\Delta}{2N-1},  \frac{-a+\Delta}{2N-1}\right],
\end{equation}
with $\Delta=\sqrt{2N\sigma+a^2-\sigma}$. In order to prove existence and uniqueness, one shall prove the thesis only for those $x$ satisfying~\eqref{eq:dom_cubic}. However, as one can easily verify from the symmetric equation $h^-(k)+k=Nk$, the intersection between the curves $h^-(k)+k$ and $Nk$ is always negative when $a$ is positive and vice-versa. Moreover, because $\frac{-a+\Delta}{2N-1}>0$ if and only if $N\geq\frac{1}{2}$, we can further refine~\eqref{eq:dom_cubic} as
\begin{equation}\label{eq:dom_cubic_red}
\left[\frac{-a-\Delta}{2N-1},0\right) \;\text{if } a>0,
\qquad
\left(0,\frac{-a+\Delta}{2N-1}\right] \;\text{if } a<0 .
\end{equation}
For simplicity, we will assume $a>0$ as the case $a<0$ follows analogously. Let us finally square the expression in~\eqref{eq:pre_cubic} by taking into account the correct domain~\eqref{eq:dom_cubic_red}
and obtain $N\gamma k^{3} \left(N - 1\right) + \gamma ak^{2} \left(2 N  - 1
\right) + k \left(- N \gamma \sigma + a^{2} \gamma - 1\right)- a \gamma \sigma=0$, i.e., a cubic that we denote as $\mathcal{C}(k)=0$. Let us compute the cubic at the right extremum of the domain~\eqref{eq:dom_cubic_red}, to obtain $\mathcal{C}\left(\frac{-a-\Delta}{2N-1}\right) = A\left[\left(2N-1\right)^2+\gamma\left(N\Delta-a(N-1)\right)^2\right]$, where $A=a+\Delta>0$. The expression is obtained after few algebraic substitutions, in particular the relation $\sigma=\frac{\Delta^2-a^2}{2N-1}$, but they are omitted for lack of space. Since the expression is a sum of squares, one can conclude that $\mathcal{C}\left(\frac{-a-\Delta}{2N-1}\right)>0$. On the other domain extremum, we obtain $\mathcal{C}(0)=-a\gamma \sigma$, which is negative, as we assumed $a>0$. Consequently, by the Intermediate Value Theorem, there exists a solution inside the domain of interest.
To prove uniqueness, one shall observe that $\mathcal{C}(k)$ is a cubic with positive leading term $N\gamma(N-1)>0$ which implies $\displaystyle\lim_{k\to\pm\infty}\mathcal{C}(k)=\pm\infty$. Moreover, we have shown that $\mathcal{C}(\overline{k})>0$, with $\overline{k}=\frac{-a-\Delta}{2N-1}<0$, such that, by continuity, the cubic must admit another real root in the negative half space. By the same reasoning, because $\mathcal{C}(0)<0$, another root exists in the positive half space. Consequently, three real distinct roots always exist, which proves uniqueness in the domain defined in~\eqref{eq:dom_cubic_red}. The expressions in~\eqref{eq:symm_FNE} are obtained through Cardano's procedures for the solution of cubic expressions.

\bibliographystyle{IEEEtran}  
\bibliography{bib} 
\end{document}